\documentclass[twocolumn,  groupedaddress, superscriptaddress,  amsmath,amssymb,  aps, pra, ]{revtex4-2}
\usepackage{amsmath,amssymb}
\usepackage{graphicx}
\usepackage{dcolumn}
\usepackage{bm}
\usepackage[usenames, dvipsnames]{color}
\usepackage{float}
\usepackage{mathrsfs}
\usepackage{microtype}
\usepackage{amsthm}
\usepackage[hyperindex=true,pdftitle={Path integral formulation of finite-dimensional quantum mechanics in discrete phase space},
colorlinks=true,pagebackref=false,citecolor=blue,plainpages=false,pdfpagelabels,breaklinks=true,
linkcolor=blue,urlcolor=blue]{hyperref}
\usepackage[margin=2cm]{geometry}

\def\di{{\rm d}}

\def\iu{{\rm i}}

\def\ket#1{|\,#1\, \rangle}

\def\proj#1#2{|\,#1\,\rangle\langle\,#2\,|}

\DeclareMathOperator{\Tr}{Tr}

\newcommand{\Zd}{\mathbb{Z}_d}

\newcommand{\Wfn}{{\rho_\mathtt{W}}}
\newcommand{\Kprop}{{G_\mathtt{W}}}
\newcommand{\Gfn}{{U_\mathtt{W}}}
\newcommand{\Hfn}{H_{\mathtt{W}}}

\newtheorem{theorem}{Theorem}
\newtheorem{proposition}{Proposition}

\begin{document}

\title{Path integral formulation of finite-dimensional quantum mechanics in discrete phase space}

\author{Leonardo A. Pach\'on}
\affiliation{guane Enterprises, R+D+I Unit, Medell\'in 050010, Colombia}

\author{Andr\'es D. G\'omez}
\affiliation{
Grupo de F\'isica At\'omica y Molecular, Instituto de F\'{\i}sica, Facultad de Ciencias Exactas y Naturales, 
Universidad de Antioquia UdeA; Calle 70 No.~52-21, Medell\'in, Colombia.}

\date{\today}

\begin{abstract}
We develop a path integral representation for the dynamics of quantum
systems with a finite-dimensional Hilbert space, formulated entirely
within a discrete phase space.  Starting from the discrete Wigner
function on $\Zd \times \Zd$ ($d$ an odd prime) and the associated Weyl
transform built from generalized displacement operators, we derive an
exact kernel that propagates the discrete Wigner function in time and,
by iterating its composition law through a short-time approximation,
obtain a sum-over-paths expression weighted by a discrete phase-space
action---the natural finite-dimensional counterpart of Marinov's
functional.  For Hamiltonians linear in the phase-space coordinates and
at times strictly commensurate with the lattice, the fluctuation sum
collapses to a deterministic shift, realizing the discrete analog of
classical Hamiltonian flow.  Applying the formalism to one and to two
interacting qutrits ($d=3$), we show that the full entanglement
dynamics---captured by a closed-form linear entropy valid for all
times---requires the coherent contribution of all fluctuation sectors;
the boundary-term (mean-field) sector alone fails to reproduce it.  For
a non-stabilizer Hamiltonian, where the short-time kernel is only
approximate, the time-sliced path integral converges to the exact
dynamics, including the dynamical generation of Wigner negativity.  We
discuss implications for the semiclassical simulation of many-body spin
systems and for the characterization of non-classicality through Wigner
negativity.
\end{abstract}

\maketitle

\section{Introduction}
\label{sec:intro}
 
Many physically important quantum systems are naturally described by
finite-dimensional Hilbert spaces. Nuclear and electronic spin degrees of
freedom, atomic-level schemes in quantum optics, and---most prominently---qubits
and qudits in quantum information processing all belong to this category.
Despite the central role these systems play in modern physics, their
phase-space formulation has received comparatively less attention than the
well-established Wigner--Weyl--Moyal framework for continuous-variable
systems~\cite{Wigner1932,Weyl1927,Moyal1949,Groenewold1946,Hillery1984}.
 
For continuous degrees of freedom, the Wigner function provides a
complete quasi-probability representation of quantum states on classical
phase space. The dynamical evolution of the Wigner function can be
expressed through an evolution kernel, and Marinov~\cite{Marinov1991}
showed that this kernel admits a path integral representation over phase
space in which a new functional---distinct from but related to the
Hamiltonian action---appears with real weights.  This path integral has
proven valuable both conceptually, in connecting quantum and classical
dynamics, and computationally, as the basis for semiclassical
approximations~\cite{Dittrich2006,Dittrich2010,OzorioDeAlmeida2006,DittrichPachon2009}.
 
The extension of the Wigner function to finite-dimensional systems was
pioneered by Buot~\cite{Buot1974}, Hannay and Berry~\cite{HannayBerry1980},
Cohen and Scully~\cite{CohenScully1986}, and Feynman~\cite{Feynman1987}
for specific low-dimensional cases, and formulated systematically by
Wootters~\cite{Wootters1987} for systems with a prime number $d$ of
orthogonal states. In Wootters's construction, the phase space is a
discrete $d\times d$ lattice with toroidal topology, and the Wigner
function is a real-valued distribution whose marginals reproduce the
correct quantum probabilities. This framework was subsequently extended
to prime-power dimensions by Gibbons, Hoffman, and
Wootters~\cite{Gibbons2004}, and explored by many
authors~\cite{Leonhardt1995,GalettiPiza1988,Vourdas2004,BjorkKlimovSanchez2008}.
 
A central result that underscores the physical significance of the
discrete Wigner function is Gross's discrete Hudson
theorem~\cite{Gross2006}: for odd-prime-dimensional systems, the only
pure states with a non-negative discrete Wigner function are the
stabilizer states.  Since stabilizer circuits can be efficiently
simulated classically~\cite{Gottesman1998,AaronsonGottesman2004}, 
Wigner negativity emerges as a necessary resource for quantum 
computational advantage within the quasiprobability-based simulation 
framework. This connection has been made precise through the resource 
theory of magic states~\cite{Veitch2012,Howard2014,Mari2012,Pashayan2015}
and reviewed within the broader quasi-probability landscape in~\cite{Ferrie2011};
it also has deep ties to
contextuality~\cite{Raussendorf2017,Delfosse2015}.
 
On the computational side, the discrete Wigner function has enabled powerful
semiclassical simulation methods for many-body spin systems.  
Schachenmayer, Pikovski, and Rey~\cite{Schachenmayer2015,Schachenmayer2015b}
introduced the discrete truncated Wigner approximation (DTWA), in which
classical trajectories on the discrete phase space are sampled via Monte
Carlo to approximate the quantum dynamics of interacting spin lattice
models.  This method has been successfully applied to long-range
interacting systems in regimes inaccessible to exact diagonalization or
tensor-network methods~\cite{Acevedo2017,Zhu2019}.
 
While the discrete Wigner function and its static properties are now
well understood, a systematic treatment of the \emph{dynamics}---in
particular a Marinov-style path integral formulation of the evolution 
kernel in discrete phase space with an explicit action 
functional---has not been fully developed.  A closely related line of 
work by Kocia and Love~\cite{KociaLove2021} develops a periodized 
stationary-phase method for discrete Wigner functions of odd-prime 
dimension, using $p$-adic number-theoretic techniques to evaluate 
quadratic Gauss sums; their approach is targeted at classical 
simulation of quantum circuits and complements the present 
Marinov-style construction, which is oriented toward Hamiltonian 
dynamics and many-body semiclassical methods.  Such a formulation 
would provide a natural bridge between the continuous Marinov path
integral and the discrete phase-space methods used in quantum
information and many-body physics, and points toward a route for
incorporating the quantum corrections that mean-field schemes such as
the DTWA omit---though, as we discuss in
Sec.~\ref{sec:applications}, turning this into a tractable many-body
method requires taming the sign problem of the oscillatory path sum.
 
In this paper we address this gap. We construct the discrete Weyl
transform and Wigner function following the algebraic approach based on
generalized displacement operators, derive the evolution kernel of the
discrete Wigner function, and obtain a path integral representation for
this kernel by iterating the composition law. Our main result,
Eq.~(\ref{eq:pathintegral}), expresses the propagator as a sum over
paths on $\Zd \times \Zd$ weighted by a phase-space action that is
the discrete counterpart of Marinov's functional.  Throughout the paper
we restrict to $d$ an odd prime, where $\Zd$ is a finite field admitting
the multiplicative inverse of $2$.  We illustrate the formalism with a
single-qutrit ($d=3$) example and apply it to an interacting two-qutrit
system, where we show that the nonzero fluctuation sectors of the path
integral are precisely the ingredients missing from any truncation
restricted to the boundary-term (mean-field) sector.

It is worth delineating what is and is not new.  The discrete twisted
convolution that underlies the kernel (Sec.~\ref{sec:kernel}) is the
discrete Moyal star product, whose explicit forms have been studied
before~\cite{GalettiPiza1988,Vourdas2004,BjorkKlimovSanchez2008}, and
the discrete Wigner propagator can be written compactly in terms of
phase-point operators.  The contribution of the present work is, rather,
(i) the assembly of these ingredients, through the kernel composition
law, into a genuine sum-over-paths with an \emph{explicit} discrete
action functional~[Eq.~(\ref{eq:discreteaction})] in one-to-one
correspondence with Marinov's continuous construction; (ii) the
identification of a strictly commensurate ``pseudo-classical'' regime in
which the fluctuation sum collapses to a Clifford shift; (iii) the
demonstration---both at finite time step and in the continuum
limit---that the boundary (mean-field) sector alone cannot reproduce the
dynamics, so that the $\tilde\mu\neq0$ sectors are indispensable; and
(iv) the explicit verification that the time-sliced sum converges to the
exact non-stabilizer dynamics.  Unlike the circuit-oriented
stationary-phase method of Kocia and Love~\cite{KociaLove2021}, which
evaluates discrete Wigner propagators of Clifford-plus-$T$ circuits, the
object constructed here is a continuous-time action for Hamiltonian
evolution.

The paper is organized as follows.  Section~\ref{sec:continuous}
reviews the Marinov path integral for continuous systems.
Section~\ref{sec:discrete} develops the discrete Weyl transform
for systems with a $d$-dimensional Hilbert space ($d$ odd prime).
Section~\ref{sec:kernel} derives the evolution kernel of the discrete
Wigner function.  Section~\ref{sec:pathintegral} obtains the path
integral representation.  Section~\ref{sec:qutrit} illustrates the
formalism with qutrit examples, including an interacting two-qutrit
system. Section~\ref{sec:applications} discusses applications. 
Section~\ref{sec:conclusions} presents conclusions and outlook.

\section{Continuous Phase-Space Formulation}
\label{sec:continuous}
 
We briefly recall the Wigner--Weyl formalism for continuous degrees of
freedom and the Marinov path integral, which serve as the template for
the discrete construction.
 
For a system with one continuous degree of freedom, the Weyl
transform~\cite{Weyl1927,Groenewold1946} maps a phase-space function
$f(q,p)$ to a self-adjoint operator $\hat{A}_f$ via the displacement
operators $\mathcal{U}(u) = e^{-\iu u\hat{P}/\hbar}$ and
$\mathcal{V}(v) = e^{-\iu v\hat{Q}/\hbar}$, which satisfy
the Weyl relation
$\mathcal{U}(u)\mathcal{V}(v) =
e^{\iu uv/\hbar}\mathcal{V}(v)\mathcal{U}(u)$.
A symmetric ordering factor ensures self-adjointness, and the inverse
Weyl transform (the Wigner map) associates to any density matrix
$\hat{\rho}$ a real quasi-probability distribution---the Wigner
function---on phase space~\cite{Wigner1932,Hillery1984,Case2008,CurtrightZachos2012}.
 
Under unitary evolution
$\hat{\rho}(t) = \hat{U}(t)\hat{\rho}(0)\hat{U}^\dagger(t)$
with $\hat{U}(t) = e^{-\iu\hat{H}t/\hbar}$, the Wigner function
evolves as
\begin{equation}
\label{eq:Wevolution}
\rho_\mathrm{W}(\mathbf{r}'',t) = \int \di \mathbf{r}'\,
G_\mathrm{W}(\mathbf{r}'',t; \mathbf{r}',0)\,\rho_\mathrm{W}(\mathbf{r}',0)\,,
\end{equation}
where $\mathbf{r} = (q,p)$ and $\mathcal{G}_\mathrm{W}$ is the Wigner
propagator~\cite{Marinov1991,Dittrich2006}. Marinov~\cite{Marinov1991}
showed that $G_\mathrm{W}$ admits a path integral representation
\begin{equation}
\label{eq:Marinov}
G_\mathrm{W}(\mathbf{r}'',\mathbf{r}') =
\frac{1}{(2\pi\hbar)^f}\int\!\mathcal{D}^{2f}\!r
\int\!\mathcal{D}^{2f}\!\tilde{r}\;
e^{\frac{\iu}{\hbar}S_{\mathrm{W}}[\mathbf{r},\tilde{\mathbf{r}},t]}\,,
\end{equation}
with the phase-space action functional
\begin{equation}
\label{eq:Marinovaction}
S_{\mathrm{W}}[\mathbf{r},\tilde{\mathbf{r}},t] = \int_0^t \!\left[
\dot{\mathbf{r}}\wedge\tilde{\mathbf{r}}
+ H_{\mathrm{W}}\!\left(\mathbf{r}+\tfrac{1}{2}\tilde{\mathbf{r}}\right)
- H_{\mathrm{W}}\!\left(\mathbf{r}-\tfrac{1}{2}\tilde{\mathbf{r}}\right)
\right]\di t'\,.
\end{equation}
Here $\mathbf{r}_i\wedge\mathbf{r}_j =
\mathbf{r}_i^{\mathrm{T}}\mathsf{J}\,\mathbf{r}_j$ is the symplectic
product, $\mathbf{r}(t')$ a trajectory with fixed endpoints, and
$\tilde{\mathbf{r}}(t')$ an unrestricted fluctuation variable. The
extrema of $S$ coincide with the classical Hamilton equations.  This
path integral has been used for semiclassical
propagation~\cite{Dittrich2006,Dittrich2010,OzorioDeAlmeida2006} and
forms the basis for Monte Carlo methods in phase
space~\cite{Marinov1991}.
 
The goal of this paper is to construct the discrete counterpart of
Eqs.~(\ref{eq:Marinov})--(\ref{eq:Marinovaction}) for systems with a
finite-dimensional Hilbert space.
 
\section{Discrete Phase Space and Weyl Transform}
\label{sec:discrete}
 
For quantum systems with a finite-dimensional Hilbert space
$\mathscr{H}$ of dimension $d$, a classical phase space in the usual
sense does not exist.  Nevertheless, the algebraic structure reviewed in
Sec.~\ref{sec:continuous} provides a blueprint: the Weyl transform is
built from displacement operators, which in turn are constructed from
position and momentum operators related by a Fourier transform.  As we
now show, each of these ingredients has a natural discrete analog.
 
Throughout this section, and unless stated otherwise, we take $d$ to be
an odd prime.  We denote by $\omega = e^{2\pi \iu / d}$ a primitive
$d$th root of unity, and all arithmetic in the subscripts of $\omega$
is understood modulo~$d$.
 
\subsection{The discrete Fourier transform and conjugate operators}
\label{sec:fourier}
 
The starting point is the observation that the Fourier transform plays a
dual role in continuous quantum mechanics: it is both a mathematical tool
and the unitary transformation relating the position and momentum
representations.  The Stone--von Neumann
theorem~\cite{VonNeumann1931,Stone1930} guarantees that, up to unitary
equivalence, there is a unique irreducible representation of the
canonical commutation relations $[\hat{Q},\hat{P}] = \iu\hbar$, and the
Fourier transform implements the change of basis between the two.
 
For finite-dimensional systems, we proceed by analogy.  Let
$\{\ket{x,n}\}_{n\in\Zd}$ be an orthonormal basis of ``position-like''
states.  For a qutrit system ($d = 3$), one may take
$\ket{x,n}$ to be the eigenstates of a diagonal operator
$\hat{x}$ with eigenvalues $0, 1, 2$.
 
We define the \emph{discrete Fourier transform} (DFT) as the unitary
operator $\mathscr{F}: \mathscr{H} \to \mathscr{H}$ given by
\begin{equation}
\label{eq:DFT}
\mathscr{F} = \frac{1}{\sqrt{d}} \sum_{m,n=0}^{d-1}
\omega^{mn}\,\proj{x,m}{x,n}\,.
\end{equation}
One readily verifies that $\mathscr{F}$ is unitary,
$\mathscr{F}^\dagger \mathscr{F} = I$, and that $\mathscr{F}^4 = I$.
 
We then define the position-like and momentum-like operators
\begin{equation}
\label{eq:xpops}
\hat{x} = \sum_{n=0}^{d-1} n\,\proj{x,n}{x,n}\,,\qquad
\hat{p} = \sum_{n=0}^{d-1} n\,\proj{p,n}{p,n}\,,
\end{equation}
where the momentum-like basis is
$\ket{p,n} = \mathscr{F}\ket{x,n}$.  By construction,
$\hat{p} = \mathscr{F}\,\hat{x}\,\mathscr{F}^\dagger$, in direct
analogy with the continuous case.
 
\subsection{Generalized displacement operators}
\label{sec:displacement}
 
From the conjugate operators $\hat{x}$ and $\hat{p}$, we construct the
clock and shift operators~\cite{Schwinger1960}
\begin{equation}
\label{eq:ZX}
\hat{Z} = \exp\!\left(\frac{2\pi\iu}{d}\,\hat{x}\right),\qquad
\hat{X} = \exp\!\left(-\frac{2\pi\iu}{d}\,\hat{p}\right).
\end{equation}
These operators act on the position basis as
\begin{equation}
\hat{Z}\ket{x,n} = \omega^n\ket{x,n}\,,\qquad
\hat{X}\ket{x,n} = \ket{x,n\oplus 1}\,,
\end{equation}
where $n \oplus 1$ denotes addition modulo $d$.  They satisfy the Weyl
commutation relation
\begin{equation}
\label{eq:ZXcommutation}
\hat{Z}\hat{X} = \omega\,\hat{X}\hat{Z}\,,
\end{equation}
which is the discrete analog of
$\mathcal{U}(u)\mathcal{V}(v) = e^{\iu
uv/\hbar}\mathcal{V}(v)\mathcal{U}(u)$.
 
The \emph{displacement operators} are defined as
\begin{equation}
\label{eq:displacement}
\hat{D}(k,j) = \omega^{-kj/2}\,\hat{Z}^k\,\hat{X}^j\,,
\qquad (k,j) \in \Zd \times \Zd\,,
\end{equation}
where $kj/2$ is understood as $kj \cdot 2^{-1} \pmod{d}$
(well-defined since $d$ is an odd prime and $\gcd(2,d) = 1$).  The
phase factor $\omega^{-kj/2}$ ensures that
$\hat{D}(k,j)^\dagger = \hat{D}(-k,-j)$.  The $d^2$ operators
$\{\hat{D}(k,j)\}$ form an orthogonal basis for the space of linear
operators on $\mathscr{H}$ with respect to the Hilbert--Schmidt inner
product:
\begin{equation}
\label{eq:orthogonality}
\Tr\!\left[\hat{D}(k,j)^\dagger\,\hat{D}(k',j')\right]
= d\,\delta_{k,k'}\,\delta_{j,j'}\,.
\end{equation}
Using Eq.~(\ref{eq:ZXcommutation}) to move $\hat{X}^{j_1}$ past
$\hat{Z}^{k_2}$, one also verifies the multiplication rule
\begin{equation}
\label{eq:Dmult}
\hat{D}(k_1,j_1)\,\hat{D}(k_2,j_2) =
\omega^{2^{-1}(k_1 j_2 - k_2 j_1)}\,
\hat{D}(k_1+k_2,j_1+j_2),
\end{equation}
consistent with $\hat{D}(k,j)^\dagger = \hat{D}(-k,-j)$.
 
\subsection{The discrete Weyl transform}
\label{sec:discreteweyl}
 
With the operator basis $\{\hat{D}(k,j)\}$ at hand, we define the
\emph{discrete Weyl transform} of an operator $\hat{A}$ acting on
$\mathscr{H}$ as
\begin{equation}
\label{eq:discreteWeyl}
\tilde{A}_{\mathtt{W}}(k,j) = \Tr\!\left[\hat{A}\,\hat{D}(k,j)\right]
= \Tr\!\left[\hat{A}\,\hat{Z}^k\,\hat{X}^j\,\omega^{-kj/2}\right],
\end{equation}
where $(k,j) \in \Zd \times \Zd$.  The function $\tilde{A}_{\mathtt{W}}(k,j)$
lives on the ``reciprocal'' discrete phase space.  The inversion formula
is
\begin{equation}
\label{eq:WeylInversion}
\hat{A} = \frac{1}{d}\sum_{k,j\in\Zd}
\tilde{A}_{\mathtt{W}}(k,j)\,\hat{D}(k,j)^\dagger\,,
\end{equation}
which follows directly from the orthogonality
relation~(\ref{eq:orthogonality}).
 
The corresponding function on the \emph{direct} discrete phase space
$\Zd \times \Zd$ is obtained by applying the two-dimensional
inverse DFT:
\begin{equation}
\label{eq:phasespacefunction}
A_{\mathtt{W}}(m,n) = \mathscr{F}^{-1}[\tilde{A}_{\mathtt{W}}](m,n)
= \frac{1}{d}\sum_{k,j\in\Zd} \tilde{A}_{\mathtt{W}}(k,j)\,
\omega^{jn - km}\,.
\end{equation}
We call $\Zd \times \Zd$, equipped with this correspondence, the
\emph{discrete phase space} of the system.  Its topology is that of a
toroidal lattice, as first noted by Wootters~\cite{Wootters1987}.
 
\subsection{The discrete Wigner function}
\label{sec:discretewigner}
 
Given a density matrix $\hat{\rho}$, we define the \emph{discrete Wigner
function} as the real-valued function
$\Wfn: \Zd \times \Zd \to \mathbb{R}$:
\begin{equation}
\label{eq:DWF}
\Wfn(m,n) = \frac{1}{d^2}\sum_{k,j\in\Zd}
\tilde{\rho}(k,j)\,\omega^{jn - km}\,,
\end{equation}
where $\tilde{\rho}_{\mathtt{W}}(k,j) = \Tr[\hat{\rho}\,\hat{D}(k,j)]$ is the
Weyl symbol of $\hat{\rho}$.  The extra factor of $1/d$ (relative
to Eq.~(\ref{eq:phasespacefunction})) is conventional and ensures that
\begin{equation}
\sum_{m,n\in\Zd} \Wfn(m,n) = 1\,,
\end{equation}
as in Wootters's original formulation~\cite{Wootters1987}.  The
marginal properties are the discrete analogs of the
continuous case~\cite{Wootters1987,Gibbons2004}.
 
Since the propagator $\Kprop$ derived below is constructed to map
$\Wfn\to\Wfn$ with the normalization~(\ref{eq:DWF}), the factor
$1/d^2$ enters both sides of the evolution equation~(\ref{eq:Wevolve})
consistently and plays no role in the qualitative structure of the
kernel (a different choice would simply rescale $\Kprop$ by an overall
constant).
 
It is important to emphasize the difference between this formulation and
those based on systems with continuous degrees of freedom but discrete
symmetry (such as a particle on a ring), where the Hilbert space is
infinite-dimensional and the discreteness arises from approximating the
Fourier transform.  Here, the Hilbert space is intrinsically
$d$-dimensional, and the discrete nature of the Wigner function is a
natural consequence of the algebraic structure.
 
\section{Evolution Kernel of the Discrete Wigner Function}
\label{sec:kernel}
 
We now derive the kernel that propagates the discrete Wigner function
under unitary evolution.
 
\subsection{Time evolution in the Weyl representation}
 
Under unitary evolution with $\hat{U}(t) = e^{-\iu\hat{H}t/\hbar}$,
the density matrix evolves as
$\hat{\rho}(t) = \hat{U}(t)\hat{\rho}(0)\hat{U}^\dagger(t)$.
On the discrete phase space, the time-evolved Wigner function is related
to the initial one by
\begin{equation}
\label{eq:Wevolve}
\Wfn(\mu, t) = \sum_{\mu'\in\Zd\times\Zd}
\Kprop(\mu,t;\mu',0)\,\Wfn(\mu',0)\,,
\end{equation}
where $\mu = (m,n)$ denotes a point on the discrete phase space and
$\Kprop$ is the evolution kernel we seek.
 
To compute $\Kprop$, we first need the composition rule for the Weyl
transform of a product of operators.
 
\begin{proposition}[Discrete twisted convolution]
\label{prop:twistedconv}
Let $d$ be an odd prime, and let $\hat{A}_g$ and $\hat{A}_h$ be operators on $\mathscr{H}$ with
discrete Weyl symbols $\tilde{g}_{\mathtt{W}}(k,j)$ and $\tilde{h}_{\mathtt{W}}(k,j)$,
respectively.  Then the Weyl symbol of
$\hat{A}_f = \hat{A}_g\hat{A}_h$ is
\begin{equation}
\label{eq:twistedconv}
\begin{split}
\tilde{f}_{\mathtt{W}}(k,j) &= \frac{1}{d}\sum_{k',j'\in\Zd}
\omega^{2^{-1}(k'j - kj')} 
\\ & \times
\tilde{g}_{\mathtt{W}}(k+k',j+j')\,\tilde{h}_{\mathtt{W}}(-k',-j')\,,
\end{split}
\end{equation}
where $2^{-1}$ denotes the multiplicative inverse of $2$ in $\Zd$
(i.e., $2^{-1} = (d+1)/2$), which exists since $d$ is odd.
\end{proposition}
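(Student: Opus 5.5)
The plan is a direct computation: expand both factors in the displacement basis using the inversion formula~(\ref{eq:WeylInversion}), multiply the displacement operators with the multiplication rule~(\ref{eq:Dmult}), and collapse the resulting double sum with the orthogonality relation~(\ref{eq:orthogonality}). Write $a=(a_1,a_2)$ and $b=(b_1,b_2)$ for points of $\Zd\times\Zd$. By~(\ref{eq:WeylInversion}) and $\hat D(a)^\dagger=\hat D(-a)$,
\[
\hat A_g=\frac1d\sum_{a}\tilde g_{\mathtt W}(a)\,\hat D(-a),\qquad
\hat A_h=\frac1d\sum_{b}\tilde h_{\mathtt W}(b)\,\hat D(-b).
\]
Inserting these into~(\ref{eq:discreteWeyl}) gives
\[
\tilde f_{\mathtt W}(k,j)=\frac1{d^2}\sum_{a,b}\tilde g_{\mathtt W}(a)\,\tilde h_{\mathtt W}(b)\,\Tr\!\left[\hat D(-a)\hat D(-b)\hat D(k,j)\right].
\]

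Next I evaluate the triple trace with~(\ref{eq:Dmult}), applied twice. First,
\[
\hat D(-a)\hat D(-b)=\omega^{2^{-1}(a_1b_2-b_1a_2)}\,\hat D(-a-b).
\]
Then I multiply $\hat D(-a-b)$ by $\hat D(k,j)$, which produces $\hat D(k-a_1-b_1,\,j-a_2-b_2)$ times a second phase. By~(\ref{eq:orthogonality}) with $\hat D(0,0)=I$, the trace of $\hat D(c)$ equals $d\,\delta_{c,0}$. This forces $b=(k,j)-a$ and cancels one factor of $d$, leaving the prefactor $1/d$.

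On this support the second phase is $\omega^{2^{-1}(-kj+kj)}=1$. The first phase reduces, after substituting $b_1=k-a_1$ and $b_2=j-a_2$, to $\omega^{2^{-1}(a_1 j-k a_2)}$. Hence
\[
\tilde f_{\mathtt W}(k,j)=\frac1d\sum_a\omega^{2^{-1}(a_1j-ka_2)}\,\tilde g_{\mathtt W}(a)\,\tilde h_{\mathtt W}(k-a_1,j-a_2).
\]

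Finally, I reindex with $a=(k+k',\,j+j')$, which is a bijection of $\Zd\times\Zd$. The exponent becomes $(k+k')j-k(j+j')=k'j-kj'$, and the argument of $\tilde h_{\mathtt W}$ becomes $(-k',-j')$. This is exactly~(\ref{eq:twistedconv}).

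The only delicate step is the sign bookkeeping in the phases. One has to track the antisymmetric exponent $k_1j_2-k_2j_1$ when both arguments carry minus signs, and confirm that the phase from the second multiplication vanishes on the Kronecker support. All arithmetic in the exponents is modulo $d$ with $2^{-1}=(d+1)/2$. This is legitimate because $\omega^{x}$ depends only on $x \bmod d$ and $d$ is odd, which is precisely the hypothesis of the proposition.
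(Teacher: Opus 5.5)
Your proof is correct and takes essentially the paper's route: the inversion formula~(\ref{eq:WeylInversion}) and the multiplication rule~(\ref{eq:Dmult}) lead to the same intermediate sum $\frac{1}{d}\sum_{a}\omega^{2^{-1}(a_1 j-k a_2)}\tilde g_{\mathtt W}(a)\,\tilde h_{\mathtt W}(k-a_1,j-a_2)$, followed by the same final reindexing. The only difference is that the paper expands just $\hat A_g$ and, after cyclicity and one use of~(\ref{eq:Dmult}), reads off $\Tr[\hat A_h\hat D(k-a_1,j-a_2)]=\tilde h_{\mathtt W}(k-a_1,j-a_2)$ directly, whereas you also expand $\hat A_h$ and collapse the double sum with $\Tr\hat D(c)=d\,\delta_{c,0}$, which is one extra step to the same result.
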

 
\noindent\textit{Proof.}
By Eq.~(\ref{eq:discreteWeyl}),
$\tilde{f}_{\mathtt{W}}(k,j) = \Tr[\hat{A}_g\hat{A}_h\,\hat{D}(k,j)]$.  Expanding
$\hat{A}_g$ via Eq.~(\ref{eq:WeylInversion}) gives
\begin{equation}
\tilde{f}_{\mathtt{W}}(k,j) = \frac{1}{d}\sum_{k'',j''\in\Zd}\tilde{g}_{\mathtt{W}}(k'',j'')
\Tr\!\left[\hat{D}(k'',j'')^\dagger\hat{A}_h\hat{D}(k,j)\right].
\end{equation}
By cyclicity of the trace and $\hat{D}(k'',j'')^\dagger=\hat{D}(-k'',-j'')$,
this becomes
$\Tr[\hat{A}_h\hat{D}(k,j)\hat{D}(-k'',-j'')]$.
Applying the displacement-operator multiplication
rule~(\ref{eq:Dmult}) with $(k_1,j_1) = (k,j)$ and $(k_2,j_2) = (-k'',-j'')$,
\begin{equation*}
\hat{D}(k,j)\hat{D}(-k'',-j'') =
\omega^{2^{-1}(k''j-kj'')}\hat{D}(k-k'',j-j''),
\end{equation*}
yields
\begin{equation}
\tilde{f}_{\mathtt{W}}(k,j) = \frac{1}{d}\sum_{k'',j''}
\omega^{2^{-1}(k''j-kj'')}\,\tilde{g}_{\mathtt{W}}(k'',j'')\,\tilde{h}_{\mathtt{W}}(k-k'',j-j'').
\end{equation}
The change of dummy variable $k''\to k+k'$, $j''\to j+j'$ (equivalent
to $k'=-(k-k'')$, $j'=-(j-j'')$) reproduces
Eq.~(\ref{eq:twistedconv}).\hfill$\square$
 
\medskip
 
This is the discrete analog of the Moyal star
product~\cite{Moyal1949}; explicit forms of the star product on discrete 
phase space have been discussed in~\cite{GalettiPiza1988,Vourdas2004,
BjorkKlimovSanchez2008}.
 
\subsection{Derivation of the kernel}
\label{sec:kernelderivation}
 
Since $\hat{\rho}(t) = \hat{U}(t)\hat{\rho}(0)\hat{U}^\dagger(t)$
is a double product, we apply Proposition~\ref{prop:twistedconv}
twice.  Denoting the Weyl symbol of $\hat{U}(t)$ by
$\tilde{\Gfn}(k,j)$ and using the unitarity relation
$\widetilde{U^\dagger}(-k,-j) = \overline{\tilde{\Gfn}(k,j)}$, one
obtains after a straightforward calculation the Weyl-space kernel
$\tilde{\Kprop}(k',j';k,j)$ such that
$\tilde{\rho_t}(k',j') = \sum_{k,j}\tilde{\Kprop}(k',j';k,j)\tilde{\rho_0}(k,j)$,
namely
\begin{multline}
\label{eq:Khat}
\tilde{\Kprop}(k',j';k,j) = \frac{1}{d^2}\!\!\sum_{k'',j''\in\Zd}
\omega^{2^{-1}[k''(j' + j) - j''(k' + k)]}\times\\
\times\tilde{\Gfn}(k'+k'',j'+j'')\,
\overline{\tilde{\Gfn}(k''+k,j''+j)}\,.
\end{multline}
 
Transforming to the direct phase space via the inverse DFT yields
the propagator in closed form.
 
\begin{theorem}[Discrete Wigner propagator]
\label{thm:propagator}
For $d$ an odd prime, the evolution kernel of the discrete Wigner
function is
\begin{multline}
\label{eq:Kdirect}
\Kprop(m',n';m,n) = \frac{1}{d^2}
\sum_{\tilde{m},\tilde{n}\in\Zd}
\omega^{-2[\tilde{m}(n'-n) - \tilde{n}(m'-m)]}\times\\
\times\Gfn(\tilde{m}+m,\tilde{n}+n)\,
\overline{\Gfn(m'-\tilde{m},n'-\tilde{n})}\,,
\end{multline}
where $\Gfn(m,n) = \mathscr{F}^{-1}[\tilde{\Gfn}](m,n)$
is the phase-space representation of the evolution operator.
Equivalently, in symplectic form,
$\Kprop(\mu',\mu) = (1/d^2)\sum_{\tilde\mu}
\omega^{2\Delta\mu\wedge\tilde\mu}\,\Gfn(\mu+\tilde\mu)\,
\overline{\Gfn(\mu'-\tilde\mu)}$, with
$\Delta\mu = \mu'-\mu$ and
$\mu\wedge\tilde\mu = m\tilde n - n\tilde m$.
\end{theorem}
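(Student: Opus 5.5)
The plan is to obtain Theorem~\ref{thm:propagator} directly from the Weyl-space kernel~(\ref{eq:Khat}) by changing to the direct phase space on both indices and then collapsing every character sum with the orthogonality relation $\sum_{k\in\Zd}\omega^{ka}=d\,\delta_{a,0}$. The relevant transforms come from Eq.~(\ref{eq:DWF}) and its inverse:
\[
\Wfn(m,n)=\tfrac{1}{d^2}\sum_{k,j}\tilde\rho(k,j)\,\omega^{jn-km},\qquad
\tilde\rho(k,j)=\sum_{m,n}\Wfn(m,n)\,\omega^{km-jn}.
\]
Inserting both into $\tilde{\rho_t}=\sum\tilde{\Kprop}\,\tilde{\rho_0}$ and comparing with Eq.~(\ref{eq:Wevolve}) gives
\[
\Kprop(m',n';m,n)=\tfrac{1}{d^2}\sum_{k',j',k,j}\omega^{j'n'-k'm'}\,\tilde{\Kprop}(k',j';k,j)\,\omega^{km-jn}.
\]
Next, each $\tilde\Gfn$ in Eq.~(\ref{eq:Khat}) is replaced by its expansion in the direct-space symbol, $\tilde\Gfn(k,j)=\sum_{a,b}\Gfn(a,b)\,\omega^{ka-jb}$, which inverts Eq.~(\ref{eq:phasespacefunction}). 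The complex conjugate gives the opposite phase, with $\overline{\Gfn(c,e)}$ in place of $\Gfn(a,b)$.

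After this substitution the exponent of $\omega$ is bilinear in the six Fourier variables $k',j',k,j,k'',j''$ and the four direct-space points $a,b,c,e$. I will perform the free sums in a fixed order. First the sums over $k'$ and $j'$ are carried out; they produce Kronecker deltas tying $a,b$ to $m',n'$ and to $2^{-1}j''$, $2^{-1}k''$. Then the sums over $k$ and $j$ tie $c,e$ to $m,n$ and to the same half-integer shifts. At that stage the surviving variables are $(k'',j'')$ together with a single phase-space point. I will rename that point $(\tilde m,\tilde n)$ and choose it so that the arguments read $\Gfn(\tilde m+m,\tilde n+n)$ and $\overline{\Gfn(m'-\tilde m,n'-\tilde n)}$. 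Solving the delta constraints of the form $2^{-1}x=y$ as $x=2y$ is where the factor $2$ in the exponent $-2[\tilde m(n'-n)-\tilde n(m'-m)]$ will come from.

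Finally I will track the prefactors. The $1/d^2$ from the Wigner normalization, the $1/d^2$ in Eq.~(\ref{eq:Khat}), and a factor $d$ from each delta-producing sum should combine to the stated overall $1/d^2$. The symplectic form then follows by direct rewriting with $\Delta\mu=\mu'-\mu$, since $\omega^{-2[\tilde m\Delta n-\tilde n\Delta m]}=\omega^{2\Delta\mu\wedge\tilde\mu}$ with $\mu\wedge\tilde\mu=m\tilde n-n\tilde m$.

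I expect the main obstacle to be the phase bookkeeping in the middle step rather than any conceptual issue. Three sources of sign interact: the $2^{-1}$ factors coming from the twisted convolution of Proposition~\ref{prop:twistedconv}, the sign convention $\omega^{jn-km}$ in the DFT, and the conjugation of the second symbol. A wrong sign there would flip the sign of the exponent or shift $\tilde\mu$. To control this, I will first do the computation with $\tilde\Gfn$ concentrated on a single displacement, i.e. $\hat U=\hat D(k_0,j_0)$, where the evolution is a known phase-space shift. I will check the final formula against that case, and only then carry out the general sums.
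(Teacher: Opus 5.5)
Your route is the paper's own: Fourier-invert the Weyl-space kernel~(\ref{eq:Khat}) in both index pairs, expand $\tilde{\Gfn}$ and $\overline{\tilde{\Gfn}}$ in their direct-space symbols, and collapse the character sums. There is, however, one concrete error that spoils the prefactor. Eq.~(\ref{eq:phasespacefunction}) carries a factor $1/d$, so its inverse is $\tilde{\Gfn}(k,j)=\frac{1}{d}\sum_{a,b}\Gfn(a,b)\,\omega^{ka-jb}$. The prefactor-free inverse you wrote is correct only for $\Wfn$, whose forward transform carries $1/d^2$. With your formula, the count is $d^{-2}$ (Wigner normalization) times $d^{-2}$ (from (\ref{eq:Khat})) times $d^{4}$ (the four one-dimensional sums over $k',j',k,j$, each giving $d$). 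That is an overall prefactor $1$, not $1/d^2$. The check $\hat{U}=\hat{I}$, for which $\Gfn\equiv 1$, would then return $d^2\delta_{\mu',\mu}$ instead of $\delta_{\mu',\mu}$. Restoring a factor $1/d$ for each of the two symbols gives $d^{-6}d^{4}=d^{-2}$, as stated.

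With that fixed, the rest closes as you anticipate, with one clarification. After the sums over $k',j',k,j$, the deltas give $(a,b)=\mu'+\nu$ and $(c,e)=\mu-\nu$, where $\nu=(2^{-1}j'',\,2^{-1}k'')$; note that $k''$ and $j''$ are interchanged. The only surviving summation variable is then the pair $(k'',j'')$, equivalently $\nu$, not $(k'',j'')$ plus an extra point. The residual phase is $\omega^{k''\Delta m-j''\Delta n}=\omega^{2\Delta\mu\wedge\nu}$. At this stage the unconjugated symbol sits at $\mu'+\nu$ rather than at $\mu+\tilde\mu$, so you still need the shift $\tilde\mu=\nu+\Delta\mu$. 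It turns the arguments into $\mu+\tilde\mu$ and $\mu'-\tilde\mu$, and it leaves the phase unchanged because $\Delta\mu\wedge\Delta\mu=0$. This step is the analogue of the paper's final change of variables $\tilde\mu\to\tilde\mu-\mu$.
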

 
\noindent\textit{Proof.}
Starting from~(\ref{eq:Khat}), we Fourier-invert the Weyl-space 
kernel with respect to both $(k',j')$ and $(k,j)$ using
Eq.~(\ref{eq:phasespacefunction}), and substitute the Fourier
representations of $\tilde{\Gfn}$ and $\overline{\tilde{\Gfn}}$.
The sums over $(k',j')$, $(k,j)$, and the internal $(k'',j'')$ are
evaluated using $\frac{1}{d}\sum_{k\in\Zd} \omega^{k\ell} = \delta_{\ell,0}$;
the resulting Kronecker deltas fix two pairs of summation variables
in terms of a single remaining pair $(\tilde{m},\tilde{n})$.
Equation~(\ref{eq:Kdirect}) then follows after the change of
variables $\tilde{m} \to \tilde{m} - m$, $\tilde{n} \to \tilde{n} - n$,
permissible since the sum ranges over all of $\Zd$.\hfill$\square$
 
\medskip
 
Equation~(\ref{eq:Kdirect}) is the discrete analog of the
continuous Wigner propagator.  The phase factor
$\omega^{2\Delta\mu\wedge\tilde\mu}$ plays the role of
$e^{\iu\tilde{\mathbf{r}}\wedge(\mathbf{r}''-\mathbf{r}')/\hbar}$, with
the continuous symplectic product replaced by the discrete symplectic form
on $\Zd \times \Zd$.  The kernel is real-valued.  This can be seen 
operationally by writing 
$\Kprop(\mu',\mu) = d^{-1}\Tr[\hat A(\mu')\,\hat U\,\hat A(\mu)\,\hat U^\dagger]$,
where $\hat A(\mu) = d^{-1}\sum_{k,j}\omega^{km-jn}\hat D(k,j)^\dagger$ are 
the Hermitian phase-point operators~\cite{Wootters1987,Gibbons2004} 
satisfying $\Tr[\hat A(\mu')\hat A(\mu)] = d\,\delta_{\mu',\mu}$ and 
$\hat\rho = \sum_\mu\Wfn(\mu)\hat A(\mu)$; reality then follows as 
the trace of a product of Hermitian operators.  At the level of 
Eq.~(\ref{eq:Kdirect}), the same property is encoded in the 
$\omega$-phase cancellations of the $\tilde\mu$ sum, which are 
non-trivial term-by-term.  We have verified Eq.~(\ref{eq:Kdirect}) 
against direct evolution $\hat\rho(t) = \hat{U}(t)\hat\rho(0)
\hat{U}^\dagger(t)$ to machine precision for $d=3,5,7,11$; all 
numerical claims throughout this paper are reproduced by the code 
repository accompanying this work~\cite{quditMarinov2026}.  For the 
two-qutrit composite system of Sec.~\ref{sec:beyondDTWA} the same 
structure applies with the tensor-product phase-point operators 
$\hat A_\otimes(\mu) = \hat A(\mu_1)\otimes\hat A(\mu_2)$, and the 
kernel Eq.~(\ref{eq:Kdirect}) generalizes by replacing the $1/d^2$ 
prefactor and single $\tilde\mu$-sum by $1/d^{2n}$ and a sum over 
$(\Zd)^{2n}$, respectively (see Eq.~(\ref{eq:pathsum2})).
 
\section{Path Integral Formulation}
\label{sec:pathintegral}
 
The evolution kernel satisfies a composition law that allows us to
construct a sum-over-paths representation.
 
\subsection{Composition law}
 
From Eq.~(\ref{eq:Wevolve}) and the group property of the evolution
operator, it follows immediately that for $t > t' > t_0$,
\begin{equation}
\label{eq:composition}
\Kprop(\mu,t;\mu'',t_0)
= \sum_{\mu'\in\Zd\times\Zd}
\Kprop(\mu,t;\mu',t')\,\Kprop(\mu',t';\mu'',t_0)\,.
\end{equation}
This composition law is the discrete analog of the semigroup property
of the continuous propagator, and is the key ingredient for the path
integral construction.
 
\subsection{Time slicing and short-time kernel}
 
Let us partition the time interval $[0,t]$ into $N$ equal steps of
duration $\tau = t/N$, with $t_i = i\tau$.  Iterating the composition
law~(\ref{eq:composition}) gives
\begin{equation}
\label{eq:slicing}
\Kprop(\mu,t;\mu_0,0) = \sum_{\mu_1,\ldots,\mu_{N-1}}
\prod_{i=1}^{N} \Kprop(\mu_i,t_i;\mu_{i-1},t_{i-1})\,,
\end{equation}
where $\mu_N \equiv \mu$.
 
For small $\tau$, the phase-space function of the short-time evolution
operator $\hat{U}(\tau) = e^{-\iu\hat{H}\tau/\hbar}$ can be
approximated as
\begin{equation}
\label{eq:shorttime}
\Gfn(\mu) \approx \exp\!\left(-\frac{\iu}{\hbar}\,\Hfn(\mu)\,\tau\right),
\end{equation}
where $\Hfn(\mu) = \mathscr{F}^{-1}[\tilde{H}](\mu)$ is the
phase-space representation of the Hamiltonian.
Equation~(\ref{eq:shorttime}) approximates the phase-space symbol of
$e^{-\iu\hat H\tau/\hbar}$ by the exponential of the symbol of
$\hat H$; it is a symbol-level short-time expansion (not the Trotter
splitting $e^{\hat{A} + \hat{B} }\!\!\approx\!\!e^{\hat{A}} e^{\hat{B}}$ at the operator level).
Expanding the exact symbol $\Gfn$ via the discrete Moyal
star-exponential $e^{-\iu\tau\hat H/\hbar}_\star$ one finds
\begin{equation}
\label{eq:shorttimeerror}
\begin{split}
\Gfn(\mu) &= e^{-\iu\tau\Hfn(\mu)/\hbar}
\\ &- \frac{\tau^2}{2\hbar^2}\big(\Hfn\star\Hfn - \Hfn^2\big)(\mu) + O(\tau^3),
\end{split}
\end{equation}
so the quadratic correction is governed by the discrepancy between 
the ordinary square $\Hfn^2(\mu)$ and the Moyal self-product 
$\Hfn\star\Hfn(\mu)$~\cite{OzorioDeAlmeida1998}.  We have verified this
$\tau^2$ scaling numerically: for Hamiltonians non-diagonal in the
position basis (e.g.\ $\hat{H} = \chi\hbar(\hat{x}+\hat{p})$) the
error ratio $\|\Gfn_{\text{exact}}-\Gfn_{\text{approx}}\|_\infty / \tau^2$
converges to the expected constant 
$\tfrac{1}{2\hbar^2}\|\Hfn\star\Hfn-\Hfn^2\|_\infty$ as $\tau\to 0$.
For Hamiltonians diagonal in a stabilizer basis, 
$\Hfn\star\Hfn = \Hfn^2$ pointwise and the 
approximation~(\ref{eq:shorttime}) is exact at all orders in~$\tau$.
 
Substituting Eq.~(\ref{eq:shorttime}) into the
kernel~(\ref{eq:Kdirect}), the short-time propagator becomes
\begin{multline}
\label{eq:shorttimeK}
\Kprop(\mu_i,\mu_{i-1}) \approx \frac{1}{d^2}
\sum_{\tilde{\mu}\in\Zd\times\Zd}
\omega^{-2\Delta\mu_i\wedge\tilde{\mu}}\times\\
\times
\exp\!\left[\frac{\iu}{\hbar}\big(
\Hfn(\mu_i+\tilde{\mu}) - \Hfn(\mu_{i-1}-\tilde{\mu})
\big)\tau\right],
\end{multline}
where $\Delta\mu_i = \mu_i - \mu_{i-1}$ and
$\mu\wedge\tilde{\mu} = m\tilde{n} - n\tilde{m}$ denotes the
discrete symplectic product for $\mu = (m,n)$ and
$\tilde{\mu} = (\tilde{m},\tilde{n})$.
The direct substitution of Eq.~(\ref{eq:shorttime}) into
Eq.~(\ref{eq:Kdirect}) first produces the phase
$\omega^{+2\Delta\mu_i\wedge\tilde{\mu}}$ with the Hamiltonian difference
$\Hfn(\mu_i-\tilde{\mu})-\Hfn(\mu_{i-1}+\tilde{\mu})$ in the exponent;
the form~(\ref{eq:shorttimeK}) follows by the change of dummy variable
$\tilde{\mu}\to-\tilde{\mu}$, which preserves the full sum over
$\Zd\times\Zd$ and yields Marinov-like Hamiltonian arguments
$\Hfn(\mu_i+\tilde{\mu})-\Hfn(\mu_{i-1}-\tilde{\mu})$ at the cost of
flipping the symplectic phase.  
An equivalent midpoint form is obtained 
by the further shift $\tilde\mu\to\tilde\mu-\Delta\mu_i/2$ (allowed 
since $2^{-1}\in\Zd$ for odd $d$), under which the symplectic phase is 
invariant ($\Delta\mu\wedge\Delta\mu = 0$) and the Hamiltonian arguments 
become $\Hfn(\bar\mu_i\pm\tilde\mu)$ with midpoint $\bar\mu_i = 
(\mu_i+\mu_{i-1})/2$; this is the direct discrete analog of Marinov's 
continuous $H_W(\mathbf{r}+\tilde{\mathbf{r}}/2)-H_W(\mathbf{r}-\tilde{\mathbf{r}}/2)$.
We have verified
identity~(\ref{eq:shorttimeK}) numerically against~(\ref{eq:Kdirect})
(with common $\Gfn\approx e^{-\iu\tau\Hfn/\hbar}$) for
$d=3, 5, 7, 11$.
 
\subsection{Sum over paths}
 
Inserting Eq.~(\ref{eq:shorttimeK}) into the time-sliced
expression~(\ref{eq:slicing}) and collecting the phases, we obtain
\begin{multline}
\label{eq:pathsum}
\Kprop(\mu,t;\mu_0,0) = \left(\frac{1}{d^{2n}}\right)^{\!N}
\sum_{\{\mu_i\}}\sum_{\{\tilde{\mu}_i\}}
\omega^{-2\sum_{i=1}^{N}\Delta\mu_i\wedge\tilde{\mu}_i}\times\\
\times\exp\!\left[\frac{\iu}{\hbar}\sum_{i=1}^{N}
\big(\Hfn(\mu_i + \tilde{\mu}_i)
- \Hfn(\mu_{i-1}-\tilde{\mu}_i)\big)\tau\right],
\end{multline}
where $n$ is the number of qudits (so that the discrete phase space has
$d^{2n}$ points), the first sum runs over all $(N{-}1)$-tuples
$(\mu_1,\ldots,\mu_{N-1}) \in [(\Zd)^{2n}]^{N-1}$ with
$\mu_N = \mu$ fixed, and the second over all $N$-tuples
$(\tilde{\mu}_1,\ldots,\tilde{\mu}_N) \in [(\Zd)^{2n}]^N$.  For
$n=1$ this reduces to $(1/d^2)^N$; for the two-qutrit example of 
Sec.~\ref{sec:beyondDTWA}, $n=2$ and the prefactor becomes 
$(1/d^4)^N = (1/81)^N$.
 
In the limit $N\to\infty$, the sequences $\{\mu_i\}$ and
$\{\tilde{\mu}_i\}$ define piecewise-constant paths on the discrete
phase space.  Introducing the formal notation of paths
$\gamma: [0,t] \to (\Zd)^{2n}$ and
$\xi: [0,t] \to (\Zd)^{2n}$, we write the propagator as
 
\begin{equation}
\label{eq:pathintegral}
\Kprop(\mu,t;\mu_0,0) = \left(\frac{1}{d^{2n}}\right)^{\!N}
\sum_\gamma\sum_\xi
e^{\iu S_{\mathtt{W}}[\gamma,\xi,t]}
\end{equation}
 
\noindent
with the discrete phase-space action
\begin{equation}
\label{eq:discreteaction}
\begin{split}
S_{\mathtt{W}}[\gamma,\xi,t] &=
-\frac{4\pi}{d}\sum_{i=1}^{N}\Delta\gamma_i\wedge\xi_i
\\
&+ \frac{1}{\hbar}\sum_{i=1}^{N}
\big[\Hfn(\gamma_i+\xi_i) - \Hfn(\gamma_{i-1}-\xi_i)\big]\tau\,.
\end{split}
\end{equation}
 
We emphasize that, unlike the continuous Marinov path integral, no 
formal continuum notation for the action (such as 
$\int_0^t\dot{\gamma}\wedge\xi\,\di\tau$) is invoked: the 
action~(\ref{eq:discreteaction}) is the fundamental object, defined as
a finite sum over time steps.  The time variable is continuous (the limit
$N\to\infty$ is taken), but the dynamical variables $\gamma_i$ and $\xi_i$
are elements of $(\Zd)^{2n}$ at each time step.
 
Equations~(\ref{eq:pathintegral})--(\ref{eq:discreteaction})
constitute the main result of this paper.  They provide the exact discrete
analog of Marinov's path integral
(\ref{eq:Marinov})--(\ref{eq:Marinovaction}) for the Wigner function
propagator.  Several features deserve comment:
 
\begin{enumerate}
\item The ``kinetic'' term $\Delta\gamma_i\wedge\xi_i$ involves the discrete
symplectic form, replacing $\dot{\mathbf{r}}\wedge\tilde{\mathbf{r}}\,\di t$ in the
continuous case.
\item The ``potential'' term
$\Hfn(\gamma_i+\xi_i) - \Hfn(\gamma_{i-1}-\xi_i)$ is the discrete
finite-difference analog of the continuous expression
$H_W(\mathbf{r}+\frac{1}{2}\tilde{\mathbf{r}})
-H_W(\mathbf{r}-\frac{1}{2}\tilde{\mathbf{r}})$.
\item The sum runs over piecewise-constant paths on the discrete toroidal
lattice $(\Zd)^{2n}$.  Unlike the continuous case, where the path integral
notation $\int\mathcal{D}r$ denotes a limit of finite-dimensional integrals,
here both the phase space \emph{and} the set of paths are inherently discrete:
for each time-slice, each path variable takes one of the $d^{2n}$ phase-space values.
At any finite $N$, expression~(\ref{eq:pathsum}) is a well-defined finite
sum; the exact propagator is obtained in the limit $N\to\infty$, which
refines only the time discretization and not the phase-space lattice.
For Hamiltonians diagonal in a stabilizer basis (such as those of 
Sec.~\ref{sec:qutrit}), the symbol-level short-time
relation~(\ref{eq:shorttime}) is exact, and~(\ref{eq:pathsum}) then
coincides with the exact propagator at any~$N$; we have verified this
numerically for $d=3,5,7,11$ by comparing single-step ($N=1$) and 
iterated ($N=2,3,\ldots$) evaluations of~(\ref{eq:pathsum}) against 
direct $\hat U\hat\rho\hat U^\dagger$ evolution.
\item The normalization factor $(1/d^{2n})^N$ plays the role of the
measure in the continuous path integral, scaling as the inverse of
the discrete phase-space volume raised to the number of time slices.
\item In the $\xi_i = 0$ sector, every symplectic phase is unity
and the Hamiltonian piece of the action telescopes to the boundary 
term $\tau[\Hfn(\gamma_N)-\Hfn(\gamma_0)]/\hbar$.  After the free sum 
over the $N-1$ intermediate $\gamma_i$ (each contributing a factor 
$d^{2n}$), the $\xi=0$ contribution admits the closed form
\begin{equation}
\begin{split}
\label{eq:xizero}
\Kprop^{\xi=0}(\gamma_N,t;\gamma_0,0) &= 
 \\ 
\frac{1}{d^{2n} }
\exp & \!\left[\frac{\iu t}{N\hbar}\big(\Hfn(\gamma_N)-\Hfn(\gamma_0)\big)\right],
\end{split}
\end{equation}

with $\tau = t/N$.  This kernel is non-real at finite $N$ whenever 
$\Hfn(\gamma_N)\ne\Hfn(\gamma_0)$, and, since the exponent scales as 
$1/N$, collapses in the continuum limit $N\to\infty$ to a trivial 
uniform kernel $1/d^{2n}$ independent of both endpoints.  In both 
regimes---finite $\tau$ (non-real) and continuum ($\tau\to 0$, 
uniform and trivial)---the $\xi=0$ sector alone fails to reproduce 
the propagator; the full quantum dynamics requires the coherent 
contribution of the $\xi \neq 0$ sectors (see 
Sec.~\ref{sec:beyondDTWA}).
\end{enumerate}
 
\subsection{Linearized action and pseudo-classical dynamics}
\label{sec:classicallimit}
 
For Hamiltonians whose phase-space function is affine in the 
coordinates, $\Hfn(\mu) = a\,m + b\,n + c$, with 
$\mu=(m,n)\in\Zd\times\Zd$ and integer-valued representatives 
$m, n\in\{0,1,\ldots,d-1\}$, we treat $\Hfn$ as the restriction to 
this fundamental domain of a linear function on the lift 
$\mathbb{Z}\times\mathbb{Z}$.  With this convention, the symbol-level 
identity
\begin{equation}
\label{eq:linearH_diff}
\Hfn(\gamma_i+\xi_i) - \Hfn(\gamma_{i-1}-\xi_i) =
(a,b)\cdot(\Delta\gamma_i + 2\xi_i)
\end{equation}
holds as a relation between integer-valued symbols, provided one uses 
the lifted arithmetic on the right-hand side.  Substituting into 
Eq.~(\ref{eq:shorttimeK}), the $\tilde\mu$-sum becomes
\begin{equation}
\sum_{\tilde n_i\in\Zd}
\omega^{-(2\Delta m_i)\tilde n_i}\,
e^{2\iu b\tau \tilde n_i/\hbar}\,
\sum_{\tilde m_i\in\Zd}
\omega^{(2\Delta n_i)\tilde m_i}\,
e^{2\iu a\tau \tilde m_i/\hbar}.
\end{equation}
For the two factors above to collapse to Kronecker deltas, both the 
Gauss-sum arguments \emph{and} the phases generated by summing a 
lifted integer symbol over a period must be periodic modulo $d$.  The 
first condition requires $2a\tau/\hbar,\;2b\tau/\hbar \in 
(2\pi/d)\mathbb{Z}$.  The second requires that the residual phase 
$e^{-\iu a\tau d/\hbar}=(-1)^{k_a}$ (and its $b$-counterpart) that 
arises from integer wraparound of the lifted symbol across its period 
evaluates to $+1$, i.e.\ $k_a,k_b$ \emph{even}, where $k_a = 
ad\tau/(\pi\hbar)$ and $k_b = bd\tau/(\pi\hbar)$.  The two conditions 
combine into the single ``strict commensurability'' requirement
\begin{equation}
\label{eq:strictcommen}
\frac{a\tau}{\hbar},\;\frac{b\tau}{\hbar}\;\in\;\frac{2\pi}{d}\mathbb{Z}.
\end{equation}
When Eq.~(\ref{eq:strictcommen}) holds, the $\tilde\mu$-sum collapses 
to Kronecker deltas enforcing the discrete analog of Hamilton's 
equations,
\begin{equation}
\label{eq:discreteHamilton}
\Delta m_i \equiv k_b/2 \pmod d,\qquad
\Delta n_i \equiv -k_a/2 \pmod d,
\end{equation}
with $k_a,k_b\in 2\mathbb{Z}$ so that $k_a/2, k_b/2$ are ordinary 
integers.  The propagator then reduces to a deterministic shift on 
the toroidal lattice.  Equivalently, Eq.~(\ref{eq:strictcommen}) is 
the condition that $\hat U = e^{-\iu\tau\hat{H}/\hbar}$ acts as a pure 
Pauli (Heisenberg) operator on the qudit, which by Gross's 
theorem~\cite{Gross2006} induces an affine symplectic shift on the 
discrete Wigner function.
When only the weaker condition $2a\tau/\hbar,\;2b\tau/\hbar\in 
(2\pi/d)\mathbb{Z}$ holds (i.e.\ $k_a$ or $k_b$ integer but odd), the 
wraparound phase $(-1)^{k_a}$ does not cancel, $\hat U$ is a 
``half-Pauli'' root outside the Clifford group, and the propagator 
develops a non-delta structure---the kernel is no longer supported 
on a single lattice shift but spreads over $\Zd\times\Zd$ with an 
oscillating signature.  Away from even commensurability, the 
$\tilde\mu$-sum gives an exponential-sum pattern on $\Zd$ that 
broadens the kernel across the lattice.  For \emph{nonlinear} 
Hamiltonians, the cancellation in~(\ref{eq:linearH_diff}) fails and 
the coupling between $\gamma$ and $\xi$ sectors generates the 
genuinely quantum interference exemplified in 
Sec.~\ref{sec:beyondDTWA}.  The generalization to $n$ qudits with 
$\Hfn(\mu) = \sum_{a=1}^{n}(a_a m_a + b_a n_a) + c$ is immediate: the 
$\tilde\mu$-sum factorizes into $2n$ independent sums and each 
subsystem $a$ carries its own strict commensurability condition 
$a_a\tau/\hbar,\;b_a\tau/\hbar\in(2\pi/d)\mathbb{Z}$.  When all 
subsystems are simultaneously strictly commensurate, the composite 
kernel reduces to a product of Pauli shifts; a single 
non-commensurate or non-linear degree of freedom is already 
sufficient to reintroduce the full fluctuation sum.

\section{Applications: Qutrit Systems}
\label{sec:qutrit}
 
We now illustrate the formalism with the smallest odd-prime system,
the qutrit ($d = 3$), for which $\omega = e^{2\pi\iu/3}$ and
$2^{-1} = 2 \pmod{3}$.
 
\subsection{Single qutrit under a diagonal Hamiltonian}
\label{sec:singlequtrit}
 
Consider a three-level system with Hamiltonian
\begin{equation}
\label{eq:qutritH}
\hat{H} = \hbar\chi\,\hat{x} = \hbar\chi\,\mathrm{diag}(0,1,2)\,,
\end{equation}
which models, e.g., equally-spaced energy levels with splitting $\hbar\chi$.
The evolution operator is $\hat{U}(t) = \mathrm{diag}(1,e^{-\iu\chi t},e^{-2\iu\chi t})$.
 
We take the initial state to be the uniform superposition
$\ket{\psi(0)} = \frac{1}{\sqrt{3}}(1,1,1)^T$, which is the
momentum eigenstate $\ket{p,0}$.  Its density matrix is
$\hat{\rho}(0) = \frac{1}{3}J$, where $J$ is the $3\times 3$ matrix
of all ones.  A direct computation of the Weyl symbols
$\tilde{\rho}(k,j) = \Tr[\hat{\rho}\,\hat{D}(k,j)]$ using
$\hat{D}(k,j) = \omega^{-kj\cdot 2^{-1}}\hat{Z}^k\hat{X}^j$
yields $\tilde{\rho}(k,j) = \delta_{k,0}$ (independent of~$j$).
Substituting into Eq.~(\ref{eq:DWF}) gives
\begin{equation}
\Wfn_0(m,n) = \frac{1}{3}\delta_{n,0}\,,
\end{equation}
i.e., the Wigner function is $1/3$ on the three points $(m,0)$ with
$m\in\{0,1,2\}$, and zero elsewhere on the $3\times 3$ lattice.
 
To compute the time-evolved Wigner function, we need the phase-space
function $\Gfn(m,n) = \mathscr{F}^{-1}[\tilde{\Gfn}](m,n)$
(with $1/d$ normalization as in Eq.~(\ref{eq:phasespacefunction})),
where $\tilde{\Gfn}(k,j) = \Tr[\hat{U}(t)\hat{D}(k,j)]$.
 
Substituting into the propagator~(\ref{eq:Kdirect}) and summing over the
initial Wigner function, one obtains the time-evolved Wigner function.
At $\chi t = \pi$, for example, explicit computation gives
\begin{equation}
\Wfn(m,n,\pi/\chi) = \begin{cases}
-\tfrac{1}{9} & (m,n)\in\{(0,0),(2,0)\}\\
\tfrac{1}{3} & (m,n) = (1,0)\\
\tfrac{2}{9} & m\in\{0,2\},\ n\in\{1,2\}\\
0 & \text{otherwise}\,,
\end{cases}
\end{equation}
which sums to unity and exhibits negativity at $(0,0)$ and $(2,0)$
(Fig.~\ref{fig:wigner}).

\begin{figure*}[t]
\centering
\includegraphics[width=0.82\textwidth]{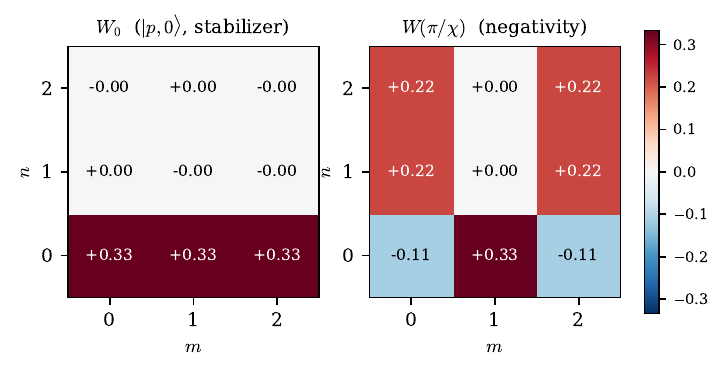}
\caption{\label{fig:wigner}%
Discrete Wigner function of the single qutrit ($d=3$) under
$\hat H = \hbar\chi\hat x$.  Left: the initial momentum eigenstate
$\ket{p,0}$, a stabilizer state, has a non-negative Wigner function
supported on the line $n=0$.  Right: at $\chi t = \pi$ the evolved
state $\hat U(\pi/\chi)\ket{p,0}$ is no longer a stabilizer state
[$\hat U(\pi/\chi)=\mathrm{diag}(1,-1,1)$ lies outside the qutrit
Clifford group] and, in accordance with the converse of Gross's
theorem~\cite{Gross2006}, develops negative Wigner values $-\tfrac{1}{9}$
at $(0,0)$ and $(2,0)$.  Values follow from the propagator
Eq.~(\ref{eq:Kdirect}) and are reproduced by the path
integral~(\ref{eq:pathsum}) to machine precision.}
\end{figure*}
Since $\hat H$ is diagonal in the stabilizer (position) basis, the
symbol-level short-time relation~(\ref{eq:shorttime}) is exact and 
Eq.~(\ref{eq:pathsum}) at any finite $N$ reproduces this Wigner 
function; we have confirmed agreement with both the direct evolution 
$\hat U\hat\rho_0\hat U^\dagger$ and the path integral for $N=1,2,3,4$ 
to machine precision.
The initial state $\ket{p,0}$ is a stabilizer state of the qutrit
and hence has a non-negative Wigner function.  The evolution operator
$\hat U(\pi/\chi) = \mathrm{diag}(1,-1,1)$ is not in the qutrit 
Clifford group: its diagonal entries are square roots of unity rather 
than cube roots, so it does not normalize the Pauli 
group~\cite{Gottesman1998}.  Consequently the evolved state 
$\hat U(\pi/\chi)\ket{p,0}$ is not a stabilizer state, and by the 
converse of Gross's theorem~\cite{Gross2006} its discrete Wigner 
function is allowed to take negative values, as observed.
 
\subsection{Fluctuation sectors and entanglement: two interacting qutrits}
\label{sec:beyondDTWA}

To exhibit the role of the nonzero fluctuation sectors in an interacting
model that generates entanglement, we consider two interacting qutrits
with Hamiltonian
\begin{equation}
\label{eq:twoqutrit}
\hat{H} = \hbar\chi\left(\hat{x}_1\otimes\hat{x}_2\right),
\end{equation}
which generates entanglement between the two subsystems.  The Hilbert
space is $\mathscr{H} = \mathbb{C}^3\otimes\mathbb{C}^3$, and the
discrete phase space is $(\mathbb{Z}_3)^4$ with $3^4 = 81$ points.
 
The path integral~(\ref{eq:pathsum}) for the composite system reads
\begin{multline}
\label{eq:pathsum2}
\Kprop(\mu,t;\mu_0,0) = \left(\frac{1}{3^4}\right)^{\!N}
\sum_{\{\mu_i\}}\sum_{\{\tilde{\mu}_i\}}
\omega^{-2\sum_i\Delta\mu_i\wedge\tilde{\mu}_i}\times\\
\times\exp\!\left[\frac{\iu}{\hbar}\sum_i
\big(\Hfn(\mu_i + \tilde{\mu}_i)
- \Hfn(\mu_{i-1}-\tilde{\mu}_i)\big)\tau\right],
\end{multline}
where now $\mu_i = (m_1,n_1,m_2,n_2)_i$ and the symplectic product
is $\mu\wedge\tilde{\mu} = \sum_{a=1}^{2}(m_a\tilde{n}_a - n_a\tilde{m}_a)$.
 
The $\tilde{\mu}=0$ sector of the path integral has a simple
structure: each symplectic phase $\omega^{-2\Delta\mu_i\wedge 0}=1$,
and the Hamiltonian sum telescopes to a boundary term in the action.  
Summing freely over the intermediate $\mu_i$ with $0<i<N$ collapses 
the $\tilde\mu=0$ contribution to the explicit closed form of 
Eq.~(\ref{eq:xizero}), $\Kprop^{\xi=0}(\mu_N;\mu_0) = d^{-2n}
\exp[\iu t\{\Hfn(\mu_N)-\Hfn(\mu_0)\}/(N\hbar)]$.
On its own, this sector fails to furnish the correct Wigner 
propagator in two distinct ways: (i) at any finite time step $\tau$, 
it has a non-vanishing imaginary part $\sim\tau$ whenever 
$\Hfn(\mu_N)\ne\Hfn(\mu_0)$, while $\Kprop$ is rigorously real 
(cf.\ Sec.~\ref{sec:kernelderivation}); (ii) in the continuum limit 
$\tau\to 0$, the boundary phase trivializes to unity, the $\xi=0$ 
contribution reduces to the uniform kernel $d^{-2n}$ (independent of 
$\mu_0$ and $\mu$) and therefore carries no dynamical information 
whatsoever.  Only the coherent addition of the $\tilde\mu\neq 0$ 
sectors---whose mutual interference is the discrete counterpart of 
the phase-space interference pattern that in the continuous case 
yields the Moyal bracket beyond the Poisson bracket---restores both 
reality and the non-trivial quantum dynamics.  We have checked both 
failure modes numerically for the two-qutrit system: retaining only 
$\tilde\mu=0$ at $N=1$ yields 
$|\mathrm{Im}\,\Kprop|_{\max}\sim 10^{-2}$ at $\chi\tau=0.5$, while 
for $N\geq 8$ the imaginary part is suppressed as $\sim 1/N$ but the 
real part tends to a uniform value $1/81\approx 0.0123$ for all 
$(\mu',\mu)$, reproducing none of the dynamics of the full 
propagator.  The discrete truncated Wigner approximation 
(DTWA)~\cite{Schachenmayer2015} is a separate semiclassical scheme 
that samples the Wigner function and propagates each sample along 
mean-field trajectories, capturing two-point correlations at short 
times but not, in general, higher-order observables such as 
entanglement entropy.
 
At each time step, the exact kernel sums over all
$d^{2n}$ fluctuation vectors in the composite phase space
($n$ the number of qudits, $d^{2n}=81$ for the present two-qutrit system):
\begin{multline}
\label{eq:DTWAcorrection}
\Kprop(\mu',\mu;\tau) = \frac{1}{81}
\sum_{\tilde{\mu}\in(\mathbb{Z}_3)^4}
\omega^{-2\Delta\mu\wedge\tilde{\mu}}\times\\
\times\exp\!\left[\frac{\iu}{\hbar}
\big(\Hfn(\mu'+\tilde{\mu}) - \Hfn(\mu-\tilde{\mu})\big)\tau\right],
\end{multline}
of which $80$ are genuine fluctuation contributions beyond the
boundary-term $\tilde{\mu}=0$ sector.
 
As a concrete quantitative test, we consider the initial product state
$\ket{\psi(0)} = \ket{p,0}\otimes\ket{p,0}$, for which the
Wigner function is the product
$\Wfn_0(\mu) = \Wfn^{(1)}_0(m_1,n_1)\Wfn^{(2)}_0(m_2,n_2)$,
and compute the linear entropy
$S_L(t) = 1 - \Tr[\hat{\rho}_1^2(t)]$ as a witness of entanglement
generation.  The time-evolved state is
$\ket{\psi(t)} = \frac{1}{3}\sum_{m,n}e^{-\iu\chi t\,mn}\ket{x,m}\ket{x,n}$.
The reduced density matrix has elements
$(\hat{\rho}_1)_{mm'} = \frac{1}{9}\sum_{n=0}^{2}e^{-\iu\chi t\,n(m-m')}$,
from which
\begin{equation}
\label{eq:traceRhoSq}
\Tr[\hat{\rho}_1^2] = \frac{27 + 4(1+2\cos\chi t)^2 + 2(1+2\cos 2\chi t)^2}{81}.
\end{equation}
Expanding for short times yields
\begin{equation}
\label{eq:linearentropy}
S_L(t) = \frac{8\chi^2 t^2}{9} + O(t^4)\,,
\end{equation}
which we have independently verified against exact diagonalization of
the $9\times 9$ Hamiltonian matrix.
We have also evaluated Eq.~(\ref{eq:pathsum2}) numerically for this
system and confirmed that the path integral reproduces the full
closed-form expression~(\ref{eq:traceRhoSq}) for $\Tr[\hat\rho_1^2]$ 
to numerical precision (we find $S_L = 8.823\times 10^{-3}$ at 
$\chi t = 0.1$, agreeing with exact diagonalization to ten decimals).
Any truncation restricted to a single $\tilde\mu$ sector---most 
natively $\tilde\mu=0$, discussed above---fails to reproduce the 
$\Tr[\hat\rho_1^2]$ dynamics: the quantum entanglement generation 
encoded in~(\ref{eq:traceRhoSq}) arises from the coherent sum over 
all fluctuation sectors.
The coefficient $8/9 = 2[\mathrm{Var}(\hat{x})]^2$ with
$\mathrm{Var}(\hat{x}) = 2/3$ for the state $\ket{p,0}$ is
consistent with the general short-time perturbation-theory result
$S_L(t) \approx 2\chi^2 t^2\,
\mathrm{Var}(\hat{A})\mathrm{Var}(\hat{B})$
for a bipartite interaction $\hat{H} = \chi\hat{A}\otimes\hat{B}$
acting on a product state.
The DTWA in its standard implementation~\cite{Schachenmayer2015}
is known to reproduce one- and two-point correlations at short times
but to be inexact for higher-order observables such as the linear
entropy computed here.  The $\tilde{\mu}\neq 0$ fluctuation sectors in
Eq.~(\ref{eq:DTWAcorrection}) are precisely the contributions that
the path integral formulation makes explicit and that any
mean-field-type truncation necessarily misses.
 
We note that Eq.~(\ref{eq:linearentropy}) is merely the short-time
expansion of the exact closed-form expression~(\ref{eq:traceRhoSq})
for $\Tr[\hat\rho_1^2]$, which is valid for all times.
The linear entropy is periodic with $S_L(t + 2\pi/\chi) = S_L(t)$,
symmetric about $\chi t = \pi$, and reaches its global maximum 
$S_L^{\max} = 2/3$ (the maximum entropy of a maximally mixed qutrit) 
at the two values $\chi t = 2\pi/3$ and $\chi t = 4\pi/3$ within a 
period, where both $(1+2\cos\chi t)$ and $(1+2\cos 2\chi t)$ vanish 
simultaneously.  The entropy vanishes at the revival $\chi t = 2\pi$
where the system returns to the initial product state.  The full time
dependence is shown in Fig.~\ref{fig:entropy}; representative
values,
\begin{equation}
\label{eq:SLvalues}
\begin{array}{c|ccccccc}
\chi t & 0.25 & 0.5 & \pi/2 & 2\pi/3 & \pi & 4\pi/3 & 2\pi \\\hline
S_L(t) & 0.053 & 0.185 & 0.593 & 0.667 & 0.395 & 0.667 & 0.000
\end{array}
\end{equation}
are reproduced by the path integral~(\ref{eq:pathsum2}) to machine
precision; the short-time expansion $8\chi^2 t^2/9$ is accurate only
for $\chi t \lesssim 0.3$, while the full path integral captures the
entire non-trivial time dependence.

\begin{figure}[t]
\centering
\includegraphics[width=\columnwidth]{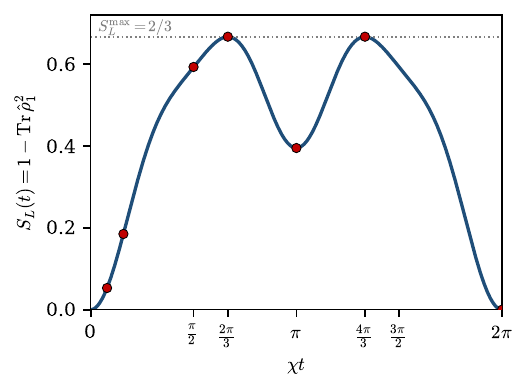}
\caption{\label{fig:entropy}%
Linear entropy $S_L(t) = 1 - \Tr[\hat\rho_1^2]$ of one qutrit in the
interacting two-qutrit system $\hat H = \hbar\chi(\hat x_1\otimes\hat
x_2)$, starting from the product state $\ket{p,0}\otimes\ket{p,0}$
(solid curve, exact closed form Eq.~(\ref{eq:traceRhoSq}); markers, the
tabulated values, each reproduced by the path
integral~(\ref{eq:pathsum2}) to machine precision).  The entropy is
periodic and symmetric about $\chi t = \pi$, reaching the maximal-mixing
value $S_L^{\max}=2/3$ (dotted line) at $\chi t = 2\pi/3$ and $4\pi/3$,
and vanishing at the revival $\chi t = 2\pi$.  Capturing the dynamics
beyond the initial $O(t^2)$ rise requires the coherent sum over all
fluctuation sectors $\tilde\mu\neq 0$.}
\end{figure}

\subsection{A genuinely approximate case: convergence of the path integral}
\label{sec:convergence}

In both examples above the Hamiltonian is diagonal in a stabilizer
(position) basis, so that $\Hfn\star\Hfn = \Hfn^2$ pointwise, the
symbol-level short-time relation~(\ref{eq:shorttime}) is \emph{exact},
and the path integral~(\ref{eq:pathsum}) already reproduces the exact
dynamics at any finite number of slices $N$.  To exhibit the path
integral in the regime for which it is designed---where the short-time
kernel is a genuine approximation that becomes exact only as
$N\to\infty$---we consider a single qutrit with the nonlinear,
non-diagonal Hamiltonian
\begin{equation}
\label{eq:harperH}
\hat H = \hbar\chi\left(\hat x^2 + \hat p^2\right),
\end{equation}
the finite-dimensional analog of a harmonic oscillator.  Since
$\hat x^2$ is diagonal in the position basis and $\hat p^2$ in the
momentum basis, $\hat H$ is not diagonal in any single stabilizer
basis; consequently $\Hfn\star\Hfn \neq \Hfn^2$ and the short-time
approximation~(\ref{eq:shorttime}) carries the generic $O(\tau^2)$
error of Eq.~(\ref{eq:shorttimeerror}).  We have verified this scaling
directly: the symbol error
$\|\Gfn_{\text{exact}}-e^{-\iu\tau\Hfn/\hbar}\|_\infty/\tau^2$
converges to the constant
$\tfrac{1}{2}\|\Hfn\star\Hfn-\Hfn^2\|_\infty \approx 8.16$
(in units $\hbar=\chi=1$) as $\tau\to 0$.

Taking the stabilizer initial state $\ket{x,0}$ and evolving to
$\chi t = 0.6$, the exact discrete Wigner function develops pronounced
negativity (minimum value $\approx -0.21$; Fig.~\ref{fig:convergence},
left), confirming that the evolution leaves the stabilizer polytope and
that the dynamics is genuinely non-classical.  Iterating the short-time
propagator~(\ref{eq:shorttimeK}) over $N$ slices, the resulting
Wigner function $W^{(N)}(\chi t)$ converges monotonically to the exact
result, with
$\|W^{(N)}-W_{\text{exact}}\|_\infty$ falling from $0.43$ at $N=1$ to
$1.3\times 10^{-2}$ at $N=64$---a clean $O(1/N)$ decay (log--log slope
$-0.98$; Fig.~\ref{fig:convergence}, right), exactly as expected from
an $O(\tau^2)$ local error accumulated over $N = t/\tau$ steps.  This
example demonstrates that the discrete path integral reproduces the
full quantum evolution, including the dynamical generation of Wigner
negativity, in the non-trivial regime where no finite-$N$ truncation is
exact.

\begin{figure*}[t]
\centering
\includegraphics[width=0.85\textwidth]{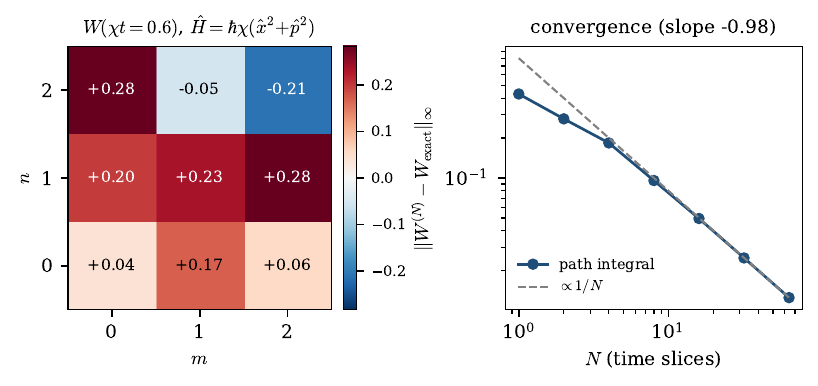}
\caption{\label{fig:convergence}%
Non-stabilizer single qutrit with $\hat H = \hbar\chi(\hat x^2+\hat
p^2)$, evolved from $\ket{x,0}$.  Left: the exact discrete Wigner
function at $\chi t = 0.6$ exhibits negativity (minimum $\approx
-0.21$), so the state lies outside the stabilizer set and the dynamics
is non-classical.  Right: convergence of the iterated short-time path
integral~(\ref{eq:shorttimeK}) to the exact Wigner function as the
number of time slices $N$ increases.  Because $\hat H$ is not
stabilizer-diagonal, the short-time symbol~(\ref{eq:shorttime}) is only
$O(\tau^2)$ accurate and the path integral is exact only in the limit
$N\to\infty$; the error decays as $O(1/N)$ (dashed guide), in contrast
to the stabilizer-diagonal examples of
Secs.~\ref{sec:singlequtrit}--\ref{sec:beyondDTWA}, which are exact at
any~$N$.}
\end{figure*}

\section{Discussion}
\label{sec:applications}
 
\subsection{Relation to the DTWA and prospects for corrections}

The two-qutrit example of Sec.~\ref{sec:beyondDTWA} illustrates a
general principle: the path integral~(\ref{eq:pathsum}) provides the
exact discrete-phase-space propagator, of which the DTWA and related
semiclassical methods~\cite{Schachenmayer2015} are inequivalent
approximations motivated by mean-field considerations.  The
contribution of the $\xi\neq 0$ fluctuation sectors is precisely
what generates the non-trivial entanglement dynamics.  We stress that
the present work makes these sectors \emph{explicit} but does not yet
furnish a tractable corrected scheme; the path integral is here a
formal and conceptual organizing principle, and a practical
finite-density method remains contingent on the sign-problem
considerations discussed below.  For
many-body systems of $n$ qudits, the phase space has $d^{2n}$ points
at each time step, and an exhaustive enumeration of all fluctuation
paths over $N$ time slices scales as $(d^{2n})^N$---clearly
intractable for large $n$.  Monte Carlo sampling of the fluctuation
paths $\xi$, analogous to the way Marinov's continuous path integral 
underlies Monte Carlo schemes for Wigner 
dynamics~\cite{Dittrich2010,Marinov1991}, offers a possible numerical 
strategy; however, the oscillatory complex weights 
$e^{\iu\mathcal{S}}$ give rise to a sign problem that must be 
addressed, for instance by saddle-point expansion around stationary 
$(\gamma,\xi)$ configurations or by resummation techniques, before 
such sampling becomes efficient.
 
\subsection{Characterization of non-classicality}
 
Gross's discrete Hudson theorem~\cite{Gross2006} establishes that, for
systems of odd prime dimension, the only pure states with a non-negative
discrete Wigner function are the stabilizer states.  This result, combined
with the Gottesman--Knill theorem~\cite{Gottesman1998}, implies that
Wigner negativity is a necessary resource for quantum computational
advantage~\cite{Veitch2012,Howard2014,Mari2012}, a connection 
made quantitative in semiclassical simulation schemes based on 
Gauss-sum evaluation of discrete Wigner path 
integrals~\cite{KociaLove2021}.
 
The propagator $\Kprop$ provides a dynamical perspective on this
resource theory.  Since $\Kprop$ itself can take negative values,
the generation or destruction of Wigner negativity under time evolution
can be tracked through the kernel.  The path integral
representation~(\ref{eq:pathintegral}) exhibits this structure
explicitly as a sum of complex phases over fluctuation sectors $\xi$,
suggesting that interference between these sectors plays a role in
the dynamical growth of Wigner negativity and ``magic''
(non-stabilizerness).  Making this connection quantitative is a
natural direction for further work.
 
\subsection{Composite systems and entanglement}
 
For composite systems of $n$ qudits, the phase space is
$(\Zd)^{2n}$ and the formalism generalizes straightforwardly.  The
path integral involves $2n$-component paths, and the action contains
the full many-body Hamiltonian.  Entanglement dynamics in this setting
has a richer structure than one might initially expect: by Gross's 
theorem~\cite{Gross2006}, entangled \emph{stabilizer} states 
(e.g.\ the qutrit Bell-type states $|\Phi\rangle = 
d^{-1/2}\sum_{n}|n,n\rangle$) have \emph{non-negative} discrete 
Wigner functions despite being maximally entangled, while entangled 
non-stabilizer states necessarily exhibit negativity (for two-qubit 
systems, a direct connection between entanglement and Wigner 
negativity was established in~\cite{Franco2006}).
The path integral provides a tool for studying the dynamical 
generation and propagation of both types of entanglement, and the 
associated (de)localization of Wigner negativity, through the 
interference of phase-space paths, as we demonstrated concretely for 
the two-qutrit system.
 
\section{Conclusions and Outlook}
\label{sec:conclusions}
 
We have constructed a path integral formulation for the dynamics of
quantum systems with a finite-dimensional Hilbert space of odd prime
dimension $d$, working entirely within the discrete phase space
$\Zd \times \Zd$.  The central result, Eq.~(\ref{eq:pathintegral}),
expresses the propagator of the discrete Wigner function as a sum
over piecewise-constant paths on the toroidal lattice, weighted by a
discrete phase-space action functional~(\ref{eq:discreteaction}).
 
The key steps in the construction were: (i) building the discrete Weyl
transform from generalized displacement operators derived from the
discrete Fourier transform; (ii) deriving the evolution kernel using
the twisted convolution (discrete Moyal product); and (iii) iterating
the composition law of the kernel with a short-time approximation to
obtain the path sum.
 
The formulation was verified for a single qutrit, and its utility
illustrated for an interacting two-qutrit system.  Beyond the 
short-time $O(t^2)$ growth of the linear entropy, the exact 
closed-form expression for $\Tr[\hat\rho_1^2]$ is reproduced by
the full path integral at all times, including the twin global maxima
$S_L^{\max} = 2/3$ at $\chi t = 2\pi/3$ and $4\pi/3$ and the revival 
at $\chi t = 2\pi$.  The $\tilde\mu = 0$ truncation of the 
fluctuation sum fails in two complementary ways---it violates 
reality at finite time step and reduces to a trivial uniform kernel 
in the continuum limit---showing that genuine quantum entanglement
requires the coherent sum over all fluctuation sectors.
For a non-stabilizer Hamiltonian, $\hat H = \hbar\chi(\hat x^2+\hat
p^2)$, for which the short-time kernel is only $O(\tau^2)$ accurate, we
demonstrated that the iterated path integral converges to the exact
dynamics---including the dynamically generated Wigner negativity---as
$O(1/N)$ in the number of time slices $N$, confirming the formulation
in the non-trivial regime where no finite-$N$ truncation is exact.
We also identified an exactly solvable ``pseudo-classical'' regime:
for Hamiltonians linear in the phase-space coordinates and at 
\emph{strictly} commensurate times---those for which the single-step 
evolution~$\hat U$ lies in the Pauli (Heisenberg) group---the 
fluctuation sum collapses to a deterministic shift on the toroidal 
lattice, realizing the discrete analog of classical Hamiltonian 
flow.  Away from strict commensurability, or for any non-linear 
contribution to the Hamiltonian, the full fluctuation sum is 
required and the propagator carries the non-classical interference 
responsible for Wigner negativity.
 
Several directions for future work suggest themselves.  First,
the extension to $d = 2$ (qubits) requires replacing the displacement
operators by a construction that avoids the multiplicative inverse 
of~$2$; the framework of~\cite{Raussendorf2017,Delfosse2015} provides 
a starting point.  Second, extending the formalism to prime-power
dimensions $d = p^k$ via the Galois-field
construction~\cite{Gibbons2004} would broaden its applicability.
Third, Monte Carlo sampling of the fluctuation paths $\xi$ in the
path sum offers a route toward numerical schemes for quantum
corrections to the DTWA in many-body systems of interacting
qudits~\cite{Zhu2019}, provided the sign problem inherent to the
oscillatory phase $e^{\iu\mathcal{S}}$ can be tamed by saddle-point 
or resummation techniques.  Finally, extending the formalism to open
quantum systems~\cite{PachonIngoldDittrich2010} would provide a 
phase-space path integral description of decoherence in 
finite-dimensional systems.

\section*{Code Availability}
The numerical routines that reproduce every quantitative claim of this 
paper---displacement-operator and Wigner-function construction, the 
three equivalent forms of the propagator (trace form, 
Eq.~(\ref{eq:Kdirect}), and the iterated short-time path integral), the 
single-qutrit Wigner function at $\chi t=\pi$, the closed-form purity
of Eq.~(\ref{eq:traceRhoSq}), the linear-entropy table, the
strict-commensurability analysis of
Sec.~\ref{sec:classicallimit}, and the $O(1/N)$ convergence of the path
integral for the non-stabilizer Hamiltonian of
Sec.~\ref{sec:convergence}---are collected in an open Python
repository~\cite{quditMarinov2026}, available at
\url{https://github.com/lpachon/qudit-marinov}.  Each claim is 
implemented as a standalone script that exits successfully only if the 
associated numerical test matches to machine precision (typically 
$10^{-10}$) for $d\in\{3,5,7,11\}$ where applicable.  The code depends 
only on NumPy and SciPy and can be run with a single command 
(\verb|python run_all_tests.py|). 

\section*{Acknowledgments}
This work was supported by the R+D+I efforts from guane Enterprises.
\bibliographystyle{apsrev4-2}
\bibliography{fdHspd}

\end{document}